\newcommand{\be}{\begin{eqnarray}}
\newcommand{\ee}{\end{eqnarray}}
\newcommand{\nn}{\nonumber \\}
\newcommand{\lb}{\label}
\newcommand{\p}[1]{(\ref{#1})}
\newtheorem{thm}{Theorem}
\newtheorem{defi}{Definition}
\begin{document}

\title*{Comments on the Newlander-Nirenberg theorem}

\author{A.V.~Smilga}

\institute{A.V.~Smilga \at  University of Nantes, \email{Smilga@subatech.in2p3.fr}}

\maketitle

\abstract{The Newlander-Nirenberg theorem says that a necessary and sufficient condition for the complex coordinates associated with a given almost complex structure tensor $I_M{}^N$ to exist is the vanishing of the Nijenhuis tensor ${\cal N}_{MN}{}^K$. In the first part of the paper, we give a heuristic but very simple proof of this fact. In the second part, we discuss a supersymmetric interpretation of this theorem. {\it (i)} The condition ${\cal N}_{MN}{}^K = 0$ is necessary for  certain ${\cal N}=1$ supersymmetric mechanical sigma models to enjoy ${\cal N}=2$ supersymmetry. {\it (ii)} The sufficiency of this condition for the existence of complex coordinates implies that the representation of the supersymmetry algebra realized by the superfields associated with  all the real coordinates and their superpartners can be presented as a direct sum of $d$ irreducible representations ($d$ is the complex dimension of the manifold)}

\section{Introduction}
\label{sec:1}

Since 1982, we know 
that many well-known structures of differential geometry, such as the de Rham complex, 
allow for a supersymmetric interpretation \cite{Wit-Morse}.
For any manifold, one can define a certain supersymmetric quantum mechanical model. 
The dynamical time-dependent variables of this model include the coordinates and their Grassmann-valued superpartners.

Supersymmetric language is very useful. Besides giving a new unexpected interpretation 
of known mathematical facts, it allows one to derive many {\it new} nontrivial results, which are difficult 
to derive in a traditional way. I give here only one example. The so-called HKT 
manifolds were first discovered by supersymmetric methods \cite{HKT} and only then they attracted the attention of pure mathematicians who gave their traditional description \cite{HKT-math}. The full classification of HKT metrics was also recently constructed using supersymmetric tools \cite{DI,HKT-nonlin}.

Supersymmetry is a standard method to study geometrical properties of the manifolds used by 
``physicists'' (I've put here the quotation marks because we are talking in this case
 about the scholars who may have studied physics at university, but who are now solving pure mathematical 
 problems without much relationship to the physical world) in the papers published in the hep-th section
  of the {\it arXiv}. On the other hand, pure mathematicians are reluctant to use it, preferring traditional methods. 

It is an unfortunate fact of our life that a large gap exists between the  two communities. The languages in which the papers 
are written and the ways of thinking derived from these languages are often very different, to the extent that mathematicians 
and physicists do not often understand each other, even though the subject of their studies could be practically identical.

That is exactly the reason by which I've decided to write this methodical paper.
Its second half is mainly addressed to mathematicians who might be curious to learn that a certain well-known mathematical fact admits 
an unexpected interpretation in the supersymmetry framework. And its first half is addressed to physicists who might have heard 
about the NN theorem, but probably do not know how it is proven. Indeed, its rigourous mathematical proof is not so trivial. So I give here a heuristic but simple reasoning, presenting  the solution to the equations \p{dz/dx-multi} as the perturbative series over a deviation of the complex structure tensor $I_M{}^N(x)$ from its flat form. This reasoning might be upgraded to a rigourous proof if the convergence of this series is proven.

\section{Geometry}
\label{sec:2}

\subsection{Preliminaries}
 
 \begin{defi}
 {\em A complex manifold is a manifold of even dimension $D= 2d$ which can be represented as a union of several overlapping charts such that:
 \begin{enumerate}
 \item Each chart is homeomorphic to $\mathbb{R}^D$.
 \item In each chart, one can define complex coordinates $z^n$. 
  \item In a region where two charts overlap, the coordinates $z^n$ in one chart and the coordinates $w^n$ in another chart are related by {\it holomorphic} transition  functions $z^n = f^n(w^m)$. 
 \end{enumerate} }
 \end{defi}
 
 \begin{defi}
 {\em A Hermitian manifold is a complex manifold endowed by Hermitian metric
 \be
 \lb{Herm}
 ds^2 \ =\ 2h_{n\bar m}\, dz^n d\bar z^{\bar m} 
  \ee
  with $\overline{h_{n\bar m}} = h_{m\bar n }$.}
   \end{defi}
   
   The factor 2 was introduced here for further conveniences---to make contact with the standard normalization in \p{L-compl-N2} and \p{L-real-N2}. 
   Mathematicians sometimes consider manifolds not endowed with the metric. In particular, the NN theorem can be formulated and proven without using the notion of metric. But we are interested in a supersymmetric interpretation of the NN theorem, and we can only give it if the Hermitian metric \p{Herm} is defined. Thus, its existence will be assumed. 
 
 An interesting and important fact is that one can describe 
complex manifolds without explicitly introducing
       complex charts, but working exclusively in the real terms.\footnote{It is convenient---especially, for supersymmetric applications---but is
not {\it necessary}. For example, the popular textbook \cite{Kodaira} uses only complex but not real description.} To this end, we introduce first the notion of an {\it almost}
 complex manifold:
   \begin{defi}
       \label{defi-almost}
       {\em An {\it almost complex manifold} is a manifold of even dimension $D$ endowed with a 
globally defined tensor field $I_{MN}$ satisfying the properties
       {\it (i)} $ I_{MN} = -I_{NM}$ and {\it (ii)} $I_M{}^N I_N{}^P = -\delta_M^P$. 
        The tensor $I_M{}^N$ is called the {\it almost 
       complex structure}}.
       \end{defi}

  To understand why a {\it real} tensor is called {\it complex} structure, consider first the 
simplest possible example---flat 2-dimensional 
Euclidean space. It can be parametrized by the 
real Cartesian coordinates $x^1, x^2$ or by the 
complex coordinate $z = (x^1 + ix^2)/\sqrt{2}$. An obvious 
relation $\partial z/\partial x^2 = i \partial z/\partial x^1$ 
holds, which can also be presented in the form
           \be
            \lb{dz/dx-2D}
          \frac {\partial z } {\partial x_A} - i  
\varepsilon_{AB} \, \frac {\partial z } {\partial x_B}  \ =\ 0 
           \ee
with 
 \be
 \lb{eps-def}
 \varepsilon = \left( \begin{array}{cc}  0&-1 \\ 1&0 \end{array} \right) \, .  
   \ee 
     The tensor $\varepsilon_{AB}$ 
satisfies both conditions in the definition 
above and {\it is} the complex structure in this case. Note that the property  \p{dz/dx-2D} holds not only for $z$, but for any holomorphic function
$f(z)$. In the latter case, the real and imaginary parts of \p{dz/dx-2D}
 are none other than the Cauchy-Riemann conditions.
 
 If a 2-dimensional manifold is not flat, $I_M{}^N$ may have a little bit more complicated form, but its tangent space projection $I_{AB}= I_{MN}e^M_Ae^N_B$ coincides with the matrix $\epsilon$ or probably with $-\epsilon$. Indeed, an antisymmetric $2\times 2$ matrix whose square is 
 $-{\mathbb 1}$ coincides with \p{eps-def} up to a sign. It describes rotations by $\pi/2$ or by $-\pi/2$. 
 
  In the general multidimensional case, one can prove a simple theorem:
    \begin{thm}
     {\em Take a tensor $I_M{}^N$ 
satisfying the conditions above. With a proper choice of the vielbeins $e^M_A$ (with a proper choice of the orthonormal base in the tangent space), its tangent 
space projection can be
     brought to the canonical form}
 \be
\lb{diag-eps}
    I_{AB} \ =\ {\rm diag} \, (\varepsilon, \ldots, \varepsilon) \, .
     \ee
 \end{thm}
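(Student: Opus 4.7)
The plan is to work entirely at a single tangent space and to use the residual $O(D)$ freedom in the vielbein to bring $I$ to the advertised form. First, I would pick an arbitrary orthonormal vielbein; in that frame $I_{AB}$ is a real antisymmetric matrix and, viewing $I$ as an endomorphism $J$ of the tangent space, the defining conditions become $J^T = -J$ and $J^2 = -\mathbb{1}$. Together they imply $J^T J = \mathbb{1}$, so $J$ is also orthogonal. The claim then reduces to the statement that any such $J$ can be block-diagonalized into $2\times 2$ $\varepsilon$-blocks by a further orthogonal change of basis.

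The key step is to split off two-dimensional invariant subspaces one at a time. Pick any unit vector $e_1$ and set $e_2 = J e_1$. Antisymmetry gives $(e_1, e_2) = (e_1, J e_1) = 0$, while $|e_2|^2 = (Je_1, Je_1) = (e_1, J^T J e_1) = 1$, so $(e_1, e_2)$ is an orthonormal pair. Moreover $J e_2 = J^2 e_1 = -e_1$, so $J$ acts on $\mathrm{span}(e_1, e_2)$ exactly as $\varepsilon$ -- as noted in the text just before the theorem statement, a $2 \times 2$ antisymmetric matrix whose square is $-\mathbb{1}$ must be $\pm\varepsilon$, and any overall minus sign can be absorbed by swapping $e_1 \leftrightarrow e_2$.

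To iterate, observe that the orthogonal complement $V = \mathrm{span}(e_1, e_2)^\perp$ is automatically $J$-invariant: for $v \in V$ and $i \in \{1,2\}$, skew-adjointness gives $(Jv, e_i) = -(v, J e_i) = 0$, since $J e_i \in \mathrm{span}(e_1, e_2)$. The restriction $J|_V$ still satisfies $J^2 = -\mathbb{1}_V$ and $J^T = -J$, so the whole construction repeats on $V$. After $D/2$ rounds one obtains an orthonormal basis $(e_1, e_2, \ldots, e_D)$ in which $I_{AB}$ takes the block-diagonal form \p{diag-eps}. As a by-product, the induction re-derives that $D$ must be even, since an odd-dimensional real vector space admits no invertible antisymmetric operator. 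There is no real obstacle in this argument -- every step is a one-line computation from antisymmetry and $J^2 = -\mathbb{1}$; the only point worth mentioning is the sign ambiguity within each block, which is settled by the relabelling just indicated.
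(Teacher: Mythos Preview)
Your proof is correct and is essentially the same as the paper's: choose a unit vector, apply $I$ to get an orthonormal partner, note that the span is an invariant plane on which $I$ acts as $\varepsilon$, check that the orthogonal complement is $I$-invariant, and iterate. You spell out a couple of points the paper leaves implicit (the observation that $J^TJ=\mathbb{1}$, the explicit verification that the complement is invariant, and the remark on the sign of each block), but the structure of the argument is identical.
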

 \begin{proof}    
To construct an orthonormal base in the tangent space $E$ where the complex structure acquires the form
\p{diag-eps}, we start with choosing in $E$ an arbitrary unit vector $e_1$. It follows from
 $I = -I^T$
and $I^2 = -\mathbb{1}$ that the vector $e_2 = Ie_1$ has also unit length and is orthogonal to 
$e_1$. Obviously, $Ie_2 = I^2 e_1 = - e_1$. Consider the subspace $E^* \subset E$ that
is orthogonal to $e_1$ and $e_2$. If it is not empty, choose there an arbitrary unit vector $f_1$ and consider  $f_2 = If_1$.
 One can easily see that $f_2$ also belongs to $E^*$.  Now consider the subspace $E^{**} \subset  E^* \subset E$ that is orthogonal to
$e_{1,2}, f_{1,2}$ and, if $E^{**}$ is not empty, repeat the procedure. 
We arrive at the matrix \p{diag-eps}.  \label{canon}
    \end{proof}

 Now consider the equation system
      \be
      \lb{dz/dx-multi}
        \frac {\partial z^n } {\partial x^M} - i  I_M{}^N  \, \frac {\partial z^n } 
{\partial x^N}  \ =\ 0 
           \ee          
If not only $I_{AB}$, but also $I_M{}^N$ has the form \p{diag-eps}, solutions to \p{dz/dx-multi} can be easily found. A simple set of $d$ independent solutions is
\be
  \lb{z0-sol}
  z_{(0)}^1 \ =\ \frac   {x^1 + ix^2} {\sqrt{2}}\,, \quad  z_{(0)}^2 \ =\ \frac {x^3 + ix^4} {\sqrt{2}} \,, \ldots \,
   \ee
   or any set of $d$ non-degenerate analytic functions of $ z_{(0)}^n$.
   
   In a generic case, the solutions to \p{dz/dx-multi} are more complicated. Moreover, they do not always exist. The conditions under which they do, is the content of the NN theorem to be proven in the next section. For the time being, we will prove that
      \begin{thm}
             {\em If the equation system \p{dz/dx-multi} has $d$ independent solutions,
 the manifold is complex. Its metric is Hermitian. }
             \end{thm}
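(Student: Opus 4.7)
The plan is to verify the three defining conditions of Definition~1 directly, leveraging the eigenspace decomposition of $I$ on the complexified cotangent space.

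First, I would reinterpret equation \p{dz/dx-multi} as the statement that each covector $dz^n$, with components $(dz^n)_M = \partial_M z^n$, lies in the $d$-dimensional complex subspace $\Lambda^{(1,0)}\subset T^*_{\mathbb{C}} M$ characterized by $\omega_M = i\,I_M{}^N\omega_N$. Complex conjugation produces the complementary subspace $\Lambda^{(0,1)}$, spanned by the $d\bar z^{\bar n}$ and characterized by $\omega_M = -i\,I_M{}^N\omega_N$. These are eigenspaces of $I$ for distinct eigenvalues, so they intersect only in zero; thus, assuming that \p{dz/dx-multi} admits $d$ functionally independent solutions $z^1,\dots,z^d$ (which is what the hypothesis must mean, since a single solution cannot yield a coordinate system), the $2d$ differentials $\{dz^n,\, d\bar z^{\bar n}\}$ are linearly independent. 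Hence $(z^n,\bar z^{\bar n})$ furnishes a local complex coordinate system and the chart is homeomorphic to $\mathbb{R}^{2d}$, giving property~1.

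For property~2, I would compute the inverse metric in the new coordinates, $\tilde h^{nm} = g^{MN}\partial_M z^n\,\partial_N z^m$. Substituting \p{dz/dx-multi} twice yields $\tilde h^{nm} = -g^{MN} I_M{}^P I_N{}^Q\,\partial_P z^n\,\partial_Q z^m$. The antisymmetry $I_{MN} = -I_{NM}$ combined with $I^2 = -\mathbb{1}$ implies the orthogonality identity $g^{MN} I_M{}^P I_N{}^Q = g^{PQ}$, so $\tilde h^{nm} = -\tilde h^{nm}$ and therefore $\tilde h^{nm} = 0$; the conjugate calculation gives $\tilde h^{\bar n\bar m} = 0$. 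Only the mixed components $\tilde h^{n\bar m}$ survive, and inverting produces precisely the Hermitian form \p{Herm}.

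Property~3 reduces to the same eigenspace observation. If $w^n$ is a second solution of \p{dz/dx-multi} on an overlapping chart, its differentials $dw^n$ also span $\Lambda^{(1,0)}$; hence each $dz^n$ is a linear combination of the $dw^m$ alone, with no $d\bar w^{\bar m}$ contribution. Equivalently $\partial z^n/\partial\bar w^{\bar m} = 0$, i.e., the transition $z = f(w)$ is holomorphic. The only genuinely delicate point in the whole argument is the tacit assumption that \p{dz/dx-multi} admits $d$ independent solutions rather than just one; once that is granted, the three properties follow cleanly from the $\pm i$ eigenspace decomposition and the orthogonality of $I$. The substantive integrability question — producing such a family of solutions from the geometric data — is precisely the content of the Newlander--Nirenberg theorem to be addressed in the next section, not of the present proposition.
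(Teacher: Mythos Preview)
Your proof is correct and follows essentially the same route as the paper's. For Hermiticity, the paper applies \p{dz/dx-multi} once and then uses the antisymmetry of $I^{NP}$ together with the symmetry of $g^{nm}$, whereas you apply it twice and use the orthogonality identity $g^{MN}I_M{}^P I_N{}^Q = g^{PQ}$; these are two packagings of the same computation. For the holomorphicity of transitions, the paper writes $I$ in the complex frame \p{I-compl} and tracks its tensor transformation law between charts, while you phrase the identical observation as ``$dw^m$ and $dz^n$ both lie in $\Lambda^{(1,0)}$, hence $dz^n$ has no $d\bar w^{\bar m}$ component.'' The eigenspace framing buys you a clean treatment of property~1 (linear independence of the $dz^n$ and $d\bar z^{\bar n}$, hence local coordinates), a point the paper leaves implicit; otherwise the two arguments coincide.
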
 
             Actually, as follows from Theorem 3 below, it is sufficient to require the existence of only one such solution. 
\begin{proof}
We will show first that the metric has a Hermitian form (i.e. the components $g^{nm}$ etc vanish) 
Let us  trade $x^M$  for $(z^n \, ,\ \bar z^{\bar n})$ 
and write 
$$g^{nm} \ =\ \frac {\partial z^n}{\partial x^M} \frac {\partial z^m}{\partial x^N}\, g^{MN} = 
i I_M{}^P \frac {\partial z^n}{\partial x^P} \frac {\partial z^m}{\partial x^N}\, g^{MN} = i I^{NP} \frac {\partial z^n}{\partial x^P} \frac {\partial z^m}{\partial x^N} = 0 $$
by symmetry considerations.
The vanishing of $g^{\bar n \bar m}$ follows from the same argument. The properties $g^{\bar n \bar m} = g^{nm} = 0$ imply also the vanishing of the 
components $g_{nm}$ and  $g_{\bar n \bar m}$ of the inverse tensor.

Next, we need to show that the transition functions between two overlapping charts with the coordinates $(z^n, \bar z^{\bar n})$ and 
$(w^m, \bar w^{\bar m})$ are holomorhic. 
To this end, we express, using \p{dz/dx-multi}, $I_M{}^N$ in the complex frame,
             \be
             \lb{I-compl}
 I_m{}^n   =  I_M{}^N \frac {\partial z^n}{\partial x^N} \frac {\partial x^M}{\partial z^m} = 
 -i \frac {\partial z^n}{\partial x^M} \frac {\partial x^M}{\partial z^m} \ =\   
 -i \delta_m^n\,, \nn
  I_{\bar m}{}^{\bar n} \ =\   i \delta_{\bar m}^{\bar n} \,,\ \ \  
              I_m{}^{\bar n} \ =\  I_{\bar m}{}^n \ =\ 0 
               \ee 
and consider the transformation of the tensor \p{I-compl} from one chart to another. Knowing that  $I$  keeps the form \p{I-compl} after this transformation, one can derive that $\partial w^m/\partial \bar z^{\bar n} = 0$.
                               \end{proof}

\subsection{NN theorem}
Not wishing to plunge into not relevant for us details, we assume that the manifold and all its structures are real analytic (can be expanded in the Taylor series). The traditional proof of the NN theorem in \cite{NN} assumes the existence of $D=2d$ derivatives. H\"ormander proved that it is sifficient to require the existence of the first derivative \cite{Hormand}.

Introduce the  object 
\be
       \lb{Nijen-def}
     {\cal N}_{MN}{}^K =   \partial_{[M} I_{N]}{}^K -   
I_M{}^P  I_N{}^Q \partial_{[P}  I_{Q]}{}^K  \,.
       \ee
        It is a tensor, in spite of the presence of the ordinary rather than covariant derivatives. This is so because one can replace the ordinary derivatives by the covariant ones---the  terms  involving the Christoffel symbols cancel out in this case. Using a sloppy language, we will call the L.H.S. of Eq. \p{Nijen-def}  the {\it Nijenhuis tensor}.\footnote{A conventional definition of the Nijenhuis tensor is a little bit different: 
     \be
      \lb{conven-Nijen} 
      {\cal N}_{MN}{}^{K(\rm conventional)} = I_M{}^P {\cal N}_{PN}{}^{K (\rm this\ paper)}\ =\ I_M{}^P \partial_{[P} I_{N]}{}^K  + I_N{}^P \partial_{[M} I_{P]}{}^K 
       \ee 
       (the last equality holds due to  $I^2 = -\mathbb{1}$).}
      We will do so because the object \p{Nijen-def} has a more transparent structure, and it is this combination that will directly appear later in \p{comm-D}. 
      
      The NN theorem says that
      
 \begin{thm} 
\lb{NN1}
      {\em \cite{NN} The complex coordinates satisfying the condition 
\p{dz/dx-multi} can be introduced and the manifold is complex iff
      the condition
       \be
       \lb{Nijen}
     {\cal N}_{MN}{}^K    \ =\ 0  
       \ee
holds.}
 \end{thm}

   \begin{proof}
\lb{huis}

\ 

 {\bf Necessity.}
 Represent the system \p{dz/dx-multi} as
${\cal D}_M z^n = 0$ with 
\be
{\cal D}_M = \partial_M  - i I_M{}^N \partial_N\,.
\ee
For self-consistency, the conditions $[{\cal D}_M, {\cal D}_N] z^n = 0$ should also hold. We derive
 \be
\lb{comm-D}
&&[{\cal D}_M, {\cal D}_N] z^n \ =\ \left[-i \partial_{[M} I_{N]}{}^Q - I_{[M}{}^P (\partial_P I_{N]}{}^Q)
\right] \partial_Q z^n \nn
 &=& \left[-i   \partial_{[M} I_{N]}{}^K - i  I_{[M}{}^P (\partial_P I_{N]}{}^Q)
I_{Q}{}^K \right] \partial_K z^n - I_{[M}{}^P (\partial_P I_{N]}{}^Q) {\cal D}_Q z^n \,.
 \ee
 Bearing in mind that ${\cal D}_Q z^n = 0$, the last term in the R.H.S. vanishes. The middle term can be transformed by flipping the derivative,
 $  (\partial_P I_{N}{}^Q) I_{Q}{}^K = -
 I_{N}{}^Q  \partial_P I_{Q}{}^K$ (this holds due to $I^2 = -\mathbb{1}$), and we finally obtain 
  \be 
[{\cal D}_M, {\cal D}_N] z^n \ =\ -i  {\cal N}_{MN}{}^K \partial_K z^n\,.
  \ee
 For this to vanish, the tensor ${\cal N}_{MN}{}^K$ should also vanish (to see this, choose the real coordinates as the real and imaginary parts of $z^n$).

 \vspace{2mm}
 
 {\bf Sufficiency.} This part of the theorem [the proof of existence of the solution to the system \p{dz/dx-multi} under the condition \p{Nijen}] is more diffucult. Well, it might be not so difficult for the mathematicians in the case when the complex structures $I_M{}^N$ represent analytic functions of the coordinates. Then the sufficiency of the conditions $[{\cal D}_M, {\cal D}_N] = {\cal K}_{MN}{}^Q \, {\cal D}_Q$ for the equation system ${\cal D}_M z^n = 0$ to have a solution is a corollary of the classical Frobenius theorem \cite{Frobenius}. We will give here instead a  heuristic  proof of the NN theorem using ``physical'' language. This proof will elucidate the meaning of the constraint \p{Nijen}. Its linearized version is similar in spirit to multidimensional  Cauchy-Riemann conditions.
 \begin{itemize}
 \item Let the complex structure $I_M{}^N$ has a canonic form \p{diag-eps}. Then the solutions to \p{dz/dx-multi} exist, and one of the solution is given by
 \p{z0-sol}.

   Suppose now that the complex structure does not coincide with $(I_0)_M{}^N \ = {\rm diag}(\varepsilon, \ldots, \varepsilon)$, 
   but is close to it: $I = I_0 + \Delta$, \ $ \Delta \ll 1$. As a first step in the proof, we will show that, after such an infinitesimal deformation, solutions to \p{dz/dx-multi} still exist.
   
   \item Let us first do so in the simplest case $D=2$. Then the condition \p{Nijen} is fulfilled identically. The condition $I^2 = - \mathbb{1}$ means that  $\{\Delta, I_0\} = 0$, which is so iff \footnote{In physical notation, $\Delta = \alpha \sigma^1 + \beta \sigma^3$, where $\sigma^{a=1,2,3}$ are the Pauli matrices.} 
    \be
    \lb{cond-Delta-D2}
   \Delta_1^1 = -\Delta_2^2, \quad \Delta_1^2 = \Delta_2^1\, .
   \ee
 
   Look now at the system  \p{dz/dx-multi}. We set $z = z_{(0)} + \delta z$. The equations acquire the form
      \be
  \frac \partial {\partial x^1}  (\delta z) + i   \frac \partial {\partial x^2} (\delta z) &=& \frac 1{\sqrt{2}}  (i\Delta_1{}^1 - \Delta_1{}^2 )\, , \nn
    \frac \partial {\partial x^2} (\delta z) - i  \frac \partial {\partial x^1} (\delta z) &=& \frac 1{\sqrt{2}} ( i\Delta_2{}^1 - \Delta_2{}^2 )\, .
    \ee
   Bearing in mind \p{cond-Delta-D2}, these two equations coincide. Introducing the notation $X^{1+i2} = X^1 + i X^2$, they can be expressed as 
   \be
     \frac {\partial (\delta z)} {\partial \bar z_{(0)}} \ =\ \frac i 2 \, \Delta_1{}^{1+i2}\,,
      \ee
      which can be easily integrated on a disk. Indeed, the whole discussion applies to a particular topologically trivial 
      chart in a set of which a manifold is subdivided.

      \item The simplest nontrivial case is $D=4$. The condition $\{\Delta, I_0\} = 0$ implies 
      \be
      \lb{cond-Delta-D4}
      \Delta_1{}^1 &=& -  \Delta_2{}^2, \qquad   \Delta_1{}^2 \ = \ \Delta_2{}^1 \, , \nn
       \Delta_1{}^3 &=& -  \Delta_2{}^4, \qquad   \Delta_1{}^4 \ = \  \Delta_2{}^3 \, , \nn
        \Delta_3{}^1 &=& -  \Delta_4{}^2, \qquad   \Delta_3{}^2 \ = \  \Delta_4{}^1 \, , \nn
       \Delta_3{}^3 &=& -  \Delta_4{}^4, \qquad   \Delta_3{}^4 \ = \  \Delta_4{}^3 \, .
        \ee
       We pose $z^1 \to  z,  z^2\to  w$. A short calculation shows that, bearing the relations 
       \p{cond-Delta-D4} in mind, the equations \p{dz/dx-multi} are reduced to 
         \be 
        \lb{eqns-D4}
        \frac {\partial (\delta z)} {\partial \bar z_{(0)}} &=& \frac i 2 \, \Delta_1{}^{1+i2} \, , \nn
         \frac {\partial (\delta z)} {\partial \bar w_{(0)}} &=& \frac i 2 \, \Delta_3{}^{1+i2} \, , \nn
         \frac {\partial (\delta w)} {\partial \bar z_{(0)} }&=&  \frac i 2 \, \Delta_1{}^{3+i4} \, , \nn
           \frac {\partial (\delta w)} {\partial \bar w_{(0)}} &=& \frac i 2 \, \Delta_3{}^{3+i4} \, .
            \ee
   If $D>2$, the conditions \p{Nijen} provide nontrivial constraints. Their linearized version is
    \be
  \lb{Nijen-Del}
  \partial_P \Delta_N{}^M - \partial_N \Delta_P{}^M \ =\ (I_0)_P{}^Q  (I_0)_N{}^S \left[ \partial_Q \Delta_S{}^M - \partial_S \Delta_Q{}^M  \right] \, . 
   \ee 
   Again, bearing in mind \p{cond-Delta-D4}, one can show that, for $D=4$, out of 24 real conditions in \p{Nijen-Del}, only 4 independent real or 2 independent complex  constraints are left. The latter have a simple form 
   \be
   \lb{cond-Delta-Nijen}
    \frac {\partial} {\partial \bar z_{(0)}} \,\Delta_3{}^{1+i2} - \frac {\partial} {\partial \bar w_{(0)}} \,\Delta_1{}^{1+i2} &=& 0 \,, \nn
    \frac {\partial} {\partial \bar z_{(0)}}\, \Delta_3{}^{3+i4} - \frac {\partial} {\partial \bar w_{(0)}}\, \Delta_1{}^{3+i4} &=& 0 \,.
     \ee
   The first equation in \p{cond-Delta-Nijen} is the integrability condition for the system of the first two equations in \p{eqns-D4}. It is necessary and also  sufficient for the 
   solution of this system to exist. Indeed, it implies that the (0,1)-form 
   $$ \omega \ =\  \Delta_1{}^{1+i2} \, d\bar z_{(0)} +   \Delta_3{}^{1+i2} \, d\bar w_{(0)} $$
     is closed, $\bar \partial_0 \, \omega = 0$. Bearing in mind the trivial topology of a chart of our complex manifold that we are discussing, $\omega$ is also exact (see e.g. Theorem 6.1 in \cite{Kodaira}), which is tantamount to saying that the solution exists. The second relation in \p{cond-Delta-Nijen} is the necessary and sufficient integrability condition
     for the system of the third and fourth equations in \p{eqns-D4}.
   
   \item This reasoning can be translated to the case of higher dimensions. 
   For an arbitrary $D = 2d$, the equations \p{dz/dx-multi} are reduced, bearing in mind $I^2 = - \mathbb{1}$, to $d^2$  conditions similar to \p{eqns-D4} but with differentiation over each antiholomorphic variable $\bar z^{\bar n}_{(0)}$ for each complex function $\delta z^n$. The conditions \p{Nijen} lead to $d^2(d-1)/2$ complex constraints which represent integrability conditions of the type \p{cond-Delta-Nijen}. They imply that the forms
      \be  
     \omega_1 \ =\  \Delta_1{}^{1+i2} \, d\bar z^1_{(0)} +   \Delta_3{}^{1+i2} \, d\bar z^2_{(0)} + \ldots , \nn
    \omega_2 \ =\  \Delta_1{}^{3+i4} \, d\bar z^1_{(0)} +   \Delta_3{}^{3+i4} \, d\bar z^2_{(0)} + \ldots ,  
    \ee
     etc.
     are all closed. Due to the trivial topology of the chart, it also means that they are exact.
     
     \item Once the complex coordinates $z^n = z^n_{(0)} + \delta z^n$ satisfying the equations \p{dz/dx-multi} are found, the complex structure acquires in these new coordinates the canonical form \p{I-compl} and \p{diag-eps}. Thus, we have actually proven that  a small deformation of $I_M{}^N$ can be brought to the form \p{diag-eps} by an infitesimal diffeomorphism, provided the condition \p{Nijen} is satisfied.
     
     \item Let now $I_M{}^N(x)$ be arbitrary, not necessarily close to $I_0$ of Eq.\p{diag-eps}. Using analyticity, we expand it into a formal series in a small parameter $\alpha$:
     \be
     \lb{I-expan}
     I(x) = I_0 + \alpha I_1(x) + \alpha^2 I_2(x) + \ldots
       \ee
       Do the same for the solutions $z^n(x)$ that we are looking for:
      \be
     \lb{z-expan}
     z^n(x) = z^n_{(0)} + \alpha z^n_{(1)}(x) + \alpha^2 z^n_{(2)}(x) + \ldots
       \ee 
       The correction $\alpha z^n_{(1)}(x)$ was determined before. Let $\tilde{z}^n(x) =  z^n_{(0)} + \alpha z^n_{(1)}(x)$. As was just mentioned, the complex structure in these new coordinates has the canonical form
       \p{I-compl} up to the terms $\propto \alpha^2$. Introducing the real and imaginary parts of  $\tilde{z}^n(x)$ and calling them $\tilde{x}^M$, we may bring it to the form \p{diag-eps}. 
        
        \item Taking also into account the term $\alpha^2 I_2(x)$ in \p{I-expan}, we may express the complex structure in the new coordinates $\tilde{x}$ as
        \be
        \lb{I-expan-new}
        I(\tilde{x}) = I_0 + \alpha^2 \tilde{I}_2 (\tilde{x}) + \ \mbox{higher-order terms.}
          \ee
          Repeating the same procedure that we used to determine $z^n_{(1)}(x)$, we can now determine 
          $\tilde{z}^n_{(2)}(\tilde{x})$, from that $z^n_{(2)}(x)$, and likewise all the terms in the series \p{z-expan}.
          
          \item With the only reservation that we did not address a difficult question of the convergence of the series \p{z-expan}, the theorem is proven.
   \end{itemize}
 \end{proof}
  
\section{Supersymmetry}

 \subsection{Preliminaries}
 To begin with, we present  some basic ``superfacts'', bearing in mind a reader who is an expert in differential geometry, but may not know much about supersymmetry. We give, however, only the minimal necessary information assuming that our reader knows the basics of Grassmann algebra and, which is not so much necessary but desirable, of classical and quantum mechanics of the systems involving Grassmann dynamical variables. More details can  be found in  the review \cite{Cooper}. See especially Chap. 8.1 there.
 
 \vspace{1mm}
 The simplest supersymmetry algebra reads
 \be
 \lb{SUSY-alg}
 Q_1^2 = Q_2^2 \ =\ H, \qquad \qquad  Q_1 Q_2 + Q_2 Q_1 \ =\ 0 \, .
  \ee
  Here $H$ is the Hamiltonian and $Q_{1,2}$ are two different Hermitian operators called {\it supercharges}. As follows from \p{SUSY-alg}, they commute with $H$.  If one introduces a complex supercharge
  $ Q \ =\ (Q_1 + iQ_2)/2$,
  one can also present \p{SUSY-alg} in the form
  \be
  \lb{SUSY-alg-compl}
  Q^2 = (\bar Q)^2 = 0, \qquad \qquad  Q \bar Q + \bar Q Q \ =\ H \,.
    \ee
  
  The algebra \p{SUSY-alg} involves two supercharges and, correspondingly, is usually called the algebra of ${\cal N} = 2$ {\it supersymmetric quantum mechanics} (SQM). More complicated algebras may involve extra supercharges\footnote{The SQM systems enjoying ${\cal N} = 4$ or 
 ${\cal N} = 8$ supersymmetry are known.}  or also the momentum operators $P_j$.  The latter algebras are relevant for supersymmetric  quantum field theories.
 But in this paper we are going to discuss only the algebra \p{SUSY-alg} and also 
 still more simple ${\cal N} = 1$ supersymmetry algebra, 
   \be
   \lb{SUSY-alg-N1}
 {\cal Q}^2 = H
   \ee
   with real ${\cal Q}$. Physically, the latter is too simple to be interesting.
  After diagonalisation, one can always extract a square root of the Hamiltonian whose spectrum is bounded from below. 
  If some energies in the spectrum are negative, one just redefines $H$ by adding an appropriate positive constant. However, we will use in what follows the algebra \p{SUSY-alg-N1} and its representations as a {\it technical tool}.
 
 The algebra \p{SUSY-alg} leads to a double degeneracy of the spectrum. It also follows from \p{SUSY-alg} that the eigenvalues of the Hamiltonian are positive or zero. The doublets involving two positive energy states $|B\rangle$ and $|F\rangle$ with the properties
  \be
  \lb{SUSY-doublet}
  H|B\rangle \ =\ E|B\rangle, &\qquad \qquad& H|F\rangle \ =\ E|F\rangle\,, \nn
  Q |B\rangle \ =\ \sqrt{E} |F\rangle\,, &\qquad \qquad& Q|F\rangle \ =\ 0\, , \nn
  \bar Q |B\rangle \ =\ 0\,, &\qquad \qquad& \bar Q |F\rangle \ =\ \sqrt{E} |B\rangle
   \ee
   represent  a simple 2-dimensional irreducible representation of the algebra \p{SUSY-alg-compl}. There exist also finite-dimensional representations involving a larger even number of states, but it is easy to show that  they are all reducible. In physical language, any  set of $2n$ states  providing a representation of \p{SUSY-alg-compl}
   is split into $n$ doublets. 
   
   The only irreducible finite-dimensional representations of the algebra \p{SUSY-alg-N1} are the trivial singlets---the eigenstates of ${\cal Q}$ and $H$. 
   
   We will be interested, however, in more complicated infinite-dimensional representations of the ${\cal N} = 1$ and ${\cal N} = 2$  algebra where the supercharges and the Hamiltonian are realized as linear differential operators acting in {\it superspace}.\footnote{Well, in supersymmetric mechanical problems, we are dealing not with ``superspace'', but rather with  ``supertime'', because we do not have any space variables and spatial dependence. But we stick to the terms commonly used in the literature.}
  
  The  ${\cal N} = 1$   superspace includes time $t$ and a real Grassmann nilpotent variable $\theta$:
   $\theta^2 = 0$. 
   The supercharges and the Hamiltonian are realized as the differential operators.
     \be
\lb{QH-N1}
{\cal Q} &=& -i \left( \frac \partial {\partial \theta} + i\theta \frac \partial {\partial t} \right)\, , \nn
H &=& -i \frac \partial {\partial t}
  \ee
 
 The Hamiltonian is the generator for the time shifts. The supercharge is the generator for somewhat more complicated transformations:
   \be
\lb{tthet-trans}
 \theta &\to& \theta + \eta \, , \nn
t &\to& t + i\eta \theta 
  \ee
 with a real Grassmann parameter $\eta$. 
 
  Consider now ${\cal N} =1$ {\it superfields} (or supervariables) representing functions of $t$ and $\theta$. Due to the nilpotency of $\theta$, they can be
  presented as
  \be
  \lb{N1-superfield}
  {\cal X}(t,\theta) = x(t) + i\theta \psi(t) \, .
   \ee
   
   The ordinary real function $x(t)$ and the Grassmann-odd real function $\psi(t)$ are called the {\it components} of the superfield \p{N1-superfield}. The shifts \p{tthet-trans} induce the shift 
   \be
   \lb{shift-X}
   \delta {\cal X} \ = \ {\cal X}(t+i\eta \theta, \theta + \eta) - {\cal X}(t, \theta) \ =\ i\eta {\cal Q} {\cal X}
    \ee
   of the superfield ${\cal X}$ implying  the following shifts of its components:
   \be
   \lb{trans-comp-N1}
   \delta x(t) \ =\ i\eta \psi(t)\, , \qquad \qquad \delta \psi(t) \ =\ -\eta
   \dot{x} \, .
    \ee 
    Note that the product of two superfields is also a superfield:
  $\delta ({\cal X}_1 {\cal X}_2) =  i \eta {\cal Q}  ({\cal X}_1 {\cal X}_2)$.

  Now we introduce  the {\it covariant supersymmetric derivative} 
  \be
  \lb{covder-N1}
   {\cal D} \ =\ \frac \partial {\partial \theta} - i\theta \frac \partial {\partial t}\,. 
 \ee
 This operator is Hermitian, nilpotent and anticommutes with ${\cal Q}$. The property
  \be
  \lb{D2=dt}
  {\cal D}^2 \ =\ -i \frac {\partial}{\partial t}
   \ee
   holds.

  \begin{thm}
  {\em If ${\cal X}$ is a superfield, the same is true for ${\cal D}{\cal X}$.}
  \end{thm}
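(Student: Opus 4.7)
The plan is to verify directly the defining covariance property: that $\delta(\mathcal{D}\mathcal{X}) = i\eta \mathcal{Q}(\mathcal{D}\mathcal{X})$, so that $\mathcal{D}\mathcal{X}$ responds to the infinitesimal shift \p{tthet-trans} by the same rule \p{shift-X} that characterizes $\mathcal{X}$ as a superfield. Since $\mathcal{D}$ is a fixed differential operator in $t$ and $\theta$ (and does not depend on the parameter $\eta$), the variation of $\mathcal{D}\mathcal{X}$ is obtained by letting $\mathcal{D}$ act on $\delta\mathcal{X} = i\eta\mathcal{Q}\mathcal{X}$, so the task reduces to moving $\mathcal{D}$ past the Grassmann-odd combination $i\eta\mathcal{Q}$.

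Two parity facts do all the work. First, $\eta$ is a Grassmann-odd parameter and $\mathcal{D}$ is a Grassmann-odd operator, so $\mathcal{D}\eta = -\eta\mathcal{D}$ as operators. Second, the text has already recorded that $\mathcal{D}$ anticommutes with $\mathcal{Q}$. Combining them,
\[
\mathcal{D}(i\eta\,\mathcal{Q}\mathcal{X}) \;=\; -i\eta\,\mathcal{D}\mathcal{Q}\,\mathcal{X} \;=\; i\eta\,\mathcal{Q}\mathcal{D}\,\mathcal{X} \;=\; i\eta\,\mathcal{Q}(\mathcal{D}\mathcal{X}),
\]
which is precisely the transformation law asserting that $\mathcal{D}\mathcal{X}$ is a superfield.

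I do not anticipate any serious obstacle; the argument is essentially a two-line sign manipulation using the anticommutation relation stated just before the theorem. The only point worth care is the bookkeeping of Grassmann signs: each of the two swaps ($\mathcal{D}$ past $\eta$, and $\mathcal{D}$ past $\mathcal{Q}$) produces a minus sign, and they cancel. This cancellation is precisely what singles out $\mathcal{D} = \partial_\theta - i\theta\partial_t$, rather than the bare $\partial_\theta$, as the \emph{covariant} superspace derivative: for $\partial_\theta$ the analogous anticommutator $\{\partial_\theta,\mathcal{Q}\}$ is nonzero (a short calculation gives $\partial_t$), so $\partial_\theta\mathcal{X}$ would fail to transform as a superfield, in contrast with $\mathcal{D}\mathcal{X}$.
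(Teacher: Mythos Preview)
Your proof is correct and is essentially identical to the paper's own argument: both compute $\delta(\mathcal{D}\mathcal{X}) = \mathcal{D}\,\delta\mathcal{X} = i\mathcal{D}(\eta\mathcal{Q}\mathcal{X})$ and then pass $\mathcal{D}$ through $\eta$ and through $\mathcal{Q}$, picking up two compensating signs from the Grassmann oddness of $\eta$ and the anticommutation $\{\mathcal{D},\mathcal{Q}\}=0$. Your closing remark that $\{\partial_\theta,\mathcal{Q}\}=\partial_t\neq 0$, explaining why $\partial_\theta$ alone is not covariant, is a useful addition not present in the paper's proof.
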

  \begin{proof}
  We have
  $$\delta({\cal D}{\cal X}) = {\cal D} \, \delta {\cal X}  =  i {\cal D} ( 
  \eta {\cal Q}  {\cal X})  = i \eta {\cal Q} ( {\cal D} {\cal X})$$
  (do not forget that $\eta$  anticommutes with ${\cal D}$). 
  \end{proof}
  We understand now why ${\cal D}$ is called the {\it covariant} derivative. In the same way as the covariant derivative in Riemannian geometry makes a tensor out of a tensor, the derivative \p{covder-N1} makes a superfield out of a superfield.

  The superfield \p{N1-superfield} with its transformation law \p{trans-comp-N1}  defines an infinite-dimensional representation of the algebra \p{SUSY-alg-N1}.
  But it is a {\it reducible} representation. Indeed, one can now impose the constraint of reality $\bar {\cal X} = {\cal X}$. A real superfield stays real under the variation \p{shift-X}.
  
  \vspace{1mm}
  
   ${\cal N} = 2$ superspace and the ${\cal N} = 2$ superfields are defined in a similar manner.
  The superspace now includes time $t$ and a {\it complex} Grassmann anticommuting variable $\theta$:
   $\theta^2 = \bar \theta^2 = \{\theta, \bar \theta \}_+ = 0$. 
   The supertransformations are 
    \be
\lb{tthet-trans-N2}
 \theta &\to& \theta + \epsilon \, , \nn
\bar \theta &\to& \bar \theta + \bar \epsilon \, , \nn
t &\to& t + i(\epsilon \bar \theta  + \bar \epsilon \theta ) 
  \ee
with complex Grassmann $\epsilon$. 
   These transformations are generated by a complex supercharge $Q$ and its Hermitian conjugate: 
     \be
\lb{Q-gen-N2}
Q  &=& - \frac i {\sqrt{2}}  \left( \frac \partial {\partial \theta} + i \bar \theta \frac \partial {\partial t} \right) \, , \nn
\bar Q  &=& - \frac i {\sqrt{2}}  \left( \frac \partial {\partial \bar \theta} + i  \theta \frac \partial {\partial t} \right)  \nn
 \ee
 [the factor $1/\sqrt{2}$ is added to ensure the validity of \p{SUSY-alg-compl}].
   A generic ${\cal N} = 2$ superfield reads
  \be
  \lb{gen-superfield}
  \Phi(t,\theta,\bar\theta) \ =\ z(t) +  i \theta   \chi(t)  +i  \bar \theta  \lambda(t)  +  \theta \bar \theta  F(t)
   \ee
   with Grassmann-even complex $z(t)$ and $F(t)$ and Grassmann-odd 
   complex $\chi(t)$ and $\lambda(t)$. The supersymmetric variation of $\Phi$ reads 
   \be
   \delta \Phi \ =\ i\sqrt{2} (\epsilon Q + \bar \epsilon \bar Q) \Phi \, .
    \ee
    The covariant supersymmetric derivatives which are nilpotent and anticommute with $Q$ and $\bar Q$ are 
 \be
\lb{covder-N2}
D &=&  \frac \partial {\partial \theta} - i \bar \theta \frac \partial {\partial t} \, , \nn
\bar D &=& -\frac \partial {\partial \bar \theta} + i \theta \frac \partial {\partial t} \, .
  \ee
  The operator $i\bar D$ is the Hermitian conjugate of $iD$.  If $\Phi$ is a superfield, then $D\Phi$ and $\bar D \Phi$ are also superfields.
  
  The superfield \p{gen-superfield} defines an infinite-dimensional representation of the algebra \p{SUSY-alg-compl}.
  This representation is reducible. Two different irreducible representations are obtained after imposing the constraints:
   \begin{itemize}
   \item The {\it reality} constraint $\bar \Phi = \Phi$. If $\Phi$ is real, the variation $\delta \Phi$ is also real.
   \item The {\it chirality} constraints  $D \Phi = 0$ or $\bar D \Phi = 0$. Again, if $ D \Phi$ vanishes, so does $D \delta \Phi$, and the same for $\bar D$.
   Note that if $\bar D Z = 0$, then $D \bar Z = 0$. We will call $Z$ a {\it left} chiral superfield and $\bar Z$ a {\it right} chiral superfield.\footnote{The terms ``left'' and ``right'' have a physical origin which is irrelevant for us here.}
   \end{itemize}
    
    In what follows, we will not be interested in the real ${\cal N}=2$ superfields, but exclusively in the chiral ones.

   For a chiral superfield, the component expansion \p{gen-superfield} can be simplified if one introduces  ``left'' and ``right'' times: 
    $$ t_L \ =\ t - i\theta \bar \theta\,, \qquad \qquad t_R \ =\ t + i\theta \bar \theta\,.$$
    The supersymmetric variation of $t_L$ depends only on $\theta$, $\delta t_L = 2i \bar \epsilon \theta$, and the supersymmetric variation 
    of $t_R$ depends only on $\bar \theta$.
  
 The set of coordinates $(t_L, \theta)$ describes the  {\em holomorphic  chiral ${\cal N} = 2$ superspace} and the set  $(t_R, \bar\theta)$ describes the  {\it antiholomorphic chiral  ${\cal N} = 2$ superspace.}

Then, if $\bar D Z = 0$, we may write
\be
\lb{Z-chiral-LR}
Z \ =\ { Z}(t_L, \theta) &=& z(t_L) + i \sqrt{2}\, \theta \, \chi (t_L) \, , \nn
\bar Z \ =\ \bar { Z}(t_R,  \bar\theta) &=& \bar z(t_R)  + i \sqrt{2} \, \bar \theta \, \bar \chi (t_R) \,.
 \ee
  The components of a left chiral superfield are transformed as  
 \be
\lb{trans-Land1}
 \delta z  \ = \  i \sqrt{2}   \epsilon \, \chi \, , \qquad \qquad    \delta \chi \ = \ -  \sqrt{2} \bar \epsilon  \, \dot{z} \, .
  \ee
  Let us pose now
  \be
  z = \frac{x_1 + ix_2}{\sqrt{2}} \,, \qquad \chi = \frac{\psi_1 + i\psi_2}{\sqrt{2}} \,, \qquad \epsilon = \frac {\eta + i\tilde \eta}{\sqrt{2}} \, . 
  \ee
  Suppose first that $\epsilon$ is real, $\tilde \eta = 0$. Then we derive
    \be 
    \lb{trans-eta}
    \delta x_1 = i \eta \psi_1, &\qquad \qquad \qquad&  \delta \psi_1 = - \eta \dot{x}_1  \, , \nn
  \delta x_2 = i \eta \psi_2, &\qquad \qquad \qquad&  \delta \psi_2 = - \eta \dot{x}_2  \, .
   \ee
   We see that the components $(x_1, \psi_1)$ are not mixed with the components $(x_2, \psi_2)$; each set is transformed in the same way as the components of an ${\cal N} = 1$ superfield [see Eq. \p{trans-comp-N1}]! In other words, the representation $Z$ is an irreducible representation of the ${\cal N} = 2$ superalgebra, but it can also be thought of as a {\it reducible} representation of  ${\cal N} = 1$ superalgebra realized by the transformations \p{trans-Land1} with real $\epsilon$. When going down from ${\cal N} = 2$  to ${\cal N} = 1$, the chiral superfield $Z$ is split into two real superfields ${\cal X}_1$ and  ${\cal X}_2$.
   To see it quite explicitly, substitute  $\theta = (\theta_1 + i\theta_2)/\sqrt{2}$ in \p{Z-chiral-LR}. Then $t_L = t + \theta_2 \theta_1$. We derive
    \be
    \lb{N2-via-N1}
    Z \ =\ \frac 1 {\sqrt{2}} \left\{ {\cal X}_1(t, \theta_1) + i {\cal X}_2(t, \theta_1) + 
    i\theta_2 [{\cal D} {\cal X}_1(t, \theta_1) + i {\cal D} {\cal X}_2(t,\theta_1)] \right\} \, .
    \ee

   Look now at the transformations \p{trans-Land1} when $\epsilon = i\tilde \eta/\sqrt{2}$ is imaginary. We obtain
     \be
     \lb{trans-tilde-eta}
     \tilde \delta x_1 = -i \tilde \eta \psi_2, &\qquad \qquad \qquad& \tilde \delta \psi_1 = -\tilde \eta \dot{x}_2, \nn
      \tilde \delta x_2 = i \tilde \eta \psi_1, &\qquad \qquad \qquad& \tilde \delta \psi_2 = \tilde \eta \dot{x}_1 
       \ee
       or  in a compact form:
    \be
    \lb{trans-tilde-XA}
    \tilde \delta {\cal X}_A \ =\ \tilde \eta  \, \varepsilon_{AB} \, {\cal D} {\cal X}_B 
     \ee  
     [with $\varepsilon$ defined as in \p{eps-def}].

  The generators of the transformations \p{trans-eta} and \p{trans-tilde-eta}  obey the  algebra \p{SUSY-alg}. Indeed, 
 \begin{itemize}
 \item It is rather evident that the transformations \p{trans-eta} and (\ref{trans-tilde-eta}, \ref{trans-tilde-XA}) commute.
Indeed, $\delta {\cal X}_A$ is a superfield, and hence $\delta (\tilde \delta {\cal X}_A)$ and $\tilde \delta (\delta {\cal X}_A)$ coincide, having both the form \p{shift-X} with
${\cal X}$ replaced by $\tilde\delta {\cal X}_A$.
  A corollary of this is the vanishing of the anticommutator
  ${\cal Q}\tilde {\cal Q} + \tilde {\cal Q} {\cal Q}$ of the corresponding quantum supercharges.
  \item Bearing in mind \p{D2=dt}, the Lie bracket of two different tilde-transformations reads
   \be
   (\tilde\delta_1    \tilde \delta_2 -  \tilde\delta_2    \tilde \delta_1){\cal X}_A = -2i \tilde\eta_1 \tilde\eta_2 \dot {\cal X}_A \, ,
    \ee
    which is tantamout to saying that $\tilde {\cal Q}^2$ coincides with the Hamiltonian (the generator of time shifts).
  \end{itemize}

   \subsection{NN theorem: supersymmetric interpretation}
   
   The tensor $\varepsilon_{AB}$ entering \p{trans-tilde-XA} can be interpeted as a $2\times 2$ block in the flat complex structure \p{diag-eps}. The components $x_A$ of the superfields ${\cal X}_A$ can be interpreted as the flat Cartesian coordinates. Suppose now that we have $2d$ ${\cal N} = 1$ superfields ${\cal X}^M$. One of the supersymmetries follows from the transformations of the superspace coordinates as in \p{trans-eta}:
    \be
    \lb{trans-eta-M}
   \delta x^M = i \eta \psi^M, &\qquad \qquad \qquad&  \delta \psi^M = - \eta \dot{x}^M  \, .
    \ee
    Looking  for a generalization of \p{trans-tilde-XA}, we anticipate the presence of the second supersymmetry,
     \be
    \lb{trans-tilde-XM}
    \tilde \delta {\cal X}^M \ =\ \tilde \eta  \, I_N{}^M({\cal X}^P) \, {\cal D} {\cal X}^N \, ,
     \ee  
   where 
    \be
    \lb{I2=-1}
   I^2 = - \mathbb{1}\,,
   \ee
    and ask: {\it under what conditions is it possible?} Under what conditions do
 the generators of the transformations \p{trans-eta-M}
   and \p{trans-tilde-XM} obey the algebra \p{SUSY-alg}?
   
   \begin{thm}
  {\em  The algebra \p{SUSY-alg} holds iff the Nijenhuis tensor \p{Nijen} vanishes. } \lb{NN2}
    \end{thm}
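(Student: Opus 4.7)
The plan is to verify the algebra \p{SUSY-alg} by computing all three relations directly on the ${\cal N}=1$ superfields $\mathcal{X}^M$ generating the two variations \p{trans-eta-M} and \p{trans-tilde-XM}. Two of them come for free. First, ${\cal Q}^2 = H$ is built into the ${\cal N}=1$ superspace realization \p{QH-N1}. Second, because $\tilde\delta \mathcal{X}^M$ is a polynomial in the superfields $\mathcal{X}^P$ (through $I_N{}^M(\mathcal{X})$) and their covariant derivatives $\mathcal{D}\mathcal{X}^N$, it is itself a superfield, so $\delta$ and $\tilde\delta$ commute automatically—exactly as in the flat discussion preceding the statement—and this gives $\{{\cal Q}, \tilde{\cal Q}\} = 0$. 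All of the content is therefore in $\tilde{\cal Q}^2 = H$, equivalent to $(\tilde\delta_1\tilde\delta_2 - \tilde\delta_2\tilde\delta_1)\mathcal{X}^M = 2i\tilde\eta_1\tilde\eta_2\dot{\mathcal{X}}^M$.

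For this key relation I will Leibniz-expand $\tilde\delta_1(\tilde\delta_2 \mathcal{X}^M)$, carefully anticommuting $\tilde\eta_1$ past $\mathcal{D}$. The expansion produces a piece in which $\mathcal{D}^2$ falls on $\mathcal{X}^L$, which via $\mathcal{D}^2 = -i\partial_t$ and $I^2 = -\mathbb{1}$ collapses to exactly $i\tilde\eta_1\tilde\eta_2\dot{\mathcal{X}}^M$ (doubled after the $1\!\leftrightarrow\!2$ subtraction, matching the desired Hamiltonian term), plus a remainder quadratic in $\mathcal{D}\mathcal{X}$ whose coefficients involve $\partial I$. After antisymmetrizing that remainder in the two Grassmann-odd factors $\mathcal{D}\mathcal{X}^A,\mathcal{D}\mathcal{X}^B$, the coefficient reduces to the bracket $-I_{[A}{}^K\partial_K I_{B]}{}^M + I_N{}^M \partial_{[A} I_{B]}{}^N$. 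Hence $\tilde{\cal Q}^2 = H$ holds iff this bracket vanishes identically.

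The last step is to identify this bracket, up to a nondegenerate contraction with $I$, with the Nijenhuis tensor \p{Nijen}. I will multiply the bracket by $I_M{}^P$, invoke the consequence $(\partial_K I_B{}^M)I_M{}^P = -I_B{}^M \partial_K I_M{}^P$ of $I^2 = -\mathbb{1}$, and relabel dummy indices inside the antisymmetrization. The two terms then collapse into $I_A{}^K I_B{}^L \partial_{[K} I_{L]}{}^P - \partial_{[A} I_{B]}{}^P = -{\cal N}_{AB}{}^P$ in the convention of \p{Nijen}. Since $I$ is invertible, the bracket vanishes iff ${\cal N}_{AB}{}^P = 0$, which is exactly the iff stated in the theorem and the Nijenhuis condition entering Theorem~\ref{NN1}.

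The hard part will be purely bookkeeping: tracking the signs picked up each time $\tilde\eta_1$ anticommutes with $\mathcal{D}$ or another Grassmann factor, and performing the index juggling needed to align the residue with the specific form \p{Nijen} of the Nijenhuis tensor rather than its conventional variant. No deeper idea is required beyond $I^2=-\mathbb{1}$ and $\mathcal{D}^2=-i\partial_t$, and the outcome will be the direct supersymmetric counterpart of the integrability condition already shown necessary and sufficient in Theorem~\ref{NN1}.
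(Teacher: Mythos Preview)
Your proposal is correct and follows essentially the same route as the paper: dispose of $[\delta,\tilde\delta]=0$ by the superfield argument, compute $(\tilde\delta_1\tilde\delta_2-\tilde\delta_2\tilde\delta_1)\mathcal{X}^M$ directly, use $\mathcal{D}^2=-i\partial_t$ and $I^2=-\mathbb{1}$ for the Hamiltonian piece, and then massage the $\mathcal{D}\mathcal{X}\,\mathcal{D}\mathcal{X}$ residue into the Nijenhuis form via the identity $(\partial_K I_B{}^M)I_M{}^P=-I_B{}^M\partial_K I_M{}^P$. The only cosmetic difference is that you spell out the contraction with $I_M{}^P$ explicitly, whereas the paper simply states that \p{eq-intr} can be brought to \p{Nijen} using $I^2=-\mathbb{1}$.
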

    \begin{proof}
   The Lie bracket $[\delta, \tilde \delta]$ vanishes by the same reason as in the flat case treated before: the transformation $\delta$ mixes the components of each multiplet, while the transformation $\tilde \delta$ mixes different superfields and does not bother much about their internal structure.
   Thus, we only need to explore the Lie bracket $(\tilde\delta_1    \tilde \delta_2 -  \tilde\delta_2    \tilde \delta_1){\cal X}^M$.
   
   Note first that
  $$ \tilde \delta ({\cal D} {\cal X}^N) = {\cal D}(\tilde \delta {\cal X}^N) \ =\ -\tilde \eta {\cal D} (I_L{}^N {\cal D} {\cal X}^L) \ =\ 
   -\tilde\eta (\partial_K I_L{}^N) {\cal D} {\cal X}^K  {\cal D} {\cal X}^L + i \tilde \eta I_L{}^N \dot{\cal X}^L \,.$$
   The commutator  of two transformations \p{trans-tilde-XM} is then derived to be
   \be
\lb{com-s-X}
\left(\tilde\delta_1\tilde\delta_2- \tilde\delta_2\tilde\delta_1 \right) {\cal X}^M &=&
2i\tilde \eta_1\tilde \eta_2 (I^2)_K{}^{M} \dot{\cal X}^K \\
&&
-2\tilde \eta_1\tilde \eta_2\Big[I_K{}^{L}\left(\partial_L I_N{}^{M}\right) +\left(\partial_N I_K{}^{L}\right)I_L{}^{M}\Big]\,{\cal D}{\cal X}^K {\cal D}{\cal X}^N \, . \nonumber
\ee
If we want it to coincide with $-2i\tilde \eta_1\tilde \eta_2\, \partial_t {\cal X}^M$  [as is dictated by Eq.\p{SUSY-alg}] the conditions \p{I2=-1} as well as
\begin{equation}
\lb{eq-intr}
\left(\partial_L I_{[N}{}^{M}\right)I_{K]}{}^{L} +\left(\partial_{[N} I_{K]}{}^{L}\right)I_L{}^{M} \ =\ 0
\end{equation}
follow.  Using again \p{I2=-1} and flipping the derivative in the second term, the L.H.S. of Eq. \p{eq-intr} can be brought into the form
\p{conven-Nijen}. The condition \p{Nijen} follows.
   
  \end{proof} 
  
  Thus, the condition ${\cal N}_{MN}{}^K = 0$ is necessary and sufficient for ${\cal N} = 2$ supersymmetry associated with the given complex structure to hold. But the NN theorem is formulated differently: it affirms that the condition \p{Nijen} is necessary and sufficient for the existence of complex coordinates.
  
  Well, as far as necessity is concerned, the equivalence of Theorems \ref{NN1} and \ref{NN2} is rather clear. Suppose that complex coordinates $z^n$ exist. But then each such coordinate can be upgraded to a complex chiral superfield $Z^n$ whose components are transformed under supersymmetry as in \p{trans-Land1}. Each superfield $Z^n$ can be expressed via a pair of ${\cal N}=1$ real superfields as in \p{N2-via-N1}.
  The complex structure tensor $I_M{}^N$ has in this case the form \p{diag-eps} and does not depend on the coordinates. The tensor ${\cal N}_{MN}{}^K$ vanishes automatically.
  
  Now, if the Nijenhuis tensor vanishes, we know from Theorem \ref{NN2} that the algebra of ${\cal N} = 2$ supersymmetry holds. The set of 
  $2d$ superfields ${\cal X}^M$ is an infinite-dimensional representation of this algebra. Then the sufficiency of \p{Nijen} means that, for $d > 1$, this representation is 
  {\it reducible} and can be decomposed in a direct sum of $d$ irreducible representations realized by the components of the chiral complex superfields $Z^n$.
  
  This latter statement looks very natural, it is widely used by physicists, but I am not aware of its independent proof. The only known proof of this fact is the proof of the sufficiency part of the NN theorem that we outlined in Sect. 2 and  that does not resort to supersymmetric description. 
  
  \subsubsection{Invariant actions}
  Up to now, when talking about the supersymmetric aspects of the NN theorem, we stayed at the purely algebraic level, having discussed only the algebras \p{SUSY-alg}, \p{SUSY-alg-N1} and their representations. A reader-mathematician may stop reading this paper at this point.
  
  But, when a physicist thinks of a symmetry, s/he is always interested in {\it dynamical systems} that enjoy these symmetries. An industrial method to find supersymmetric dynamical systems is based on the following theorem:
   \begin{thm}
   \lb{sup-act}
 {\em Let ${\cal X}(t,\theta)$ be an ${\cal N} = 1$ superfield that vanishes at $t = \pm \infty$. Then the integral (associated with the physical action)
    
  \be
  \lb{act-N1} 
  S \ =\ \int d\theta \int_{-\infty}^\infty dt \, {\cal X} 
   \ee 
   is invariant under transformations \p{tthet-trans}.}
  \end{thm}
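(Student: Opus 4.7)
The plan is to reduce the action integral to its component form and then show by direct computation that the supersymmetry variation produces a total time derivative, which vanishes upon integration thanks to the boundary conditions. First, expand ${\cal X}(t,\theta) = x(t) + i\theta\psi(t)$ as in \p{N1-superfield} and apply the Berezin rule \p{Ber-def}. Since $\partial_\theta$ annihilates the $\theta$-independent piece and returns the coefficient of $\theta$, one gets
$$S \ =\ \int d\theta \int_{-\infty}^\infty dt\,{\cal X}(t,\theta) \ =\ i \int_{-\infty}^\infty dt\, \psi(t).$$
In other words, the action is (up to $i$) the time integral of the top component $\psi$.

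Second, I would invoke the component form of the supertransformation \p{trans-comp-N1}, in particular $\delta \psi(t) = -\eta\,\dot{x}(t)$. Substituting into the expression for $S$ gives
$$\delta S \ =\ i\int_{-\infty}^\infty dt\,\delta\psi(t) \ =\ -i\eta \int_{-\infty}^\infty dt\,\dot{x}(t) \ =\ -i\eta\bigl[x(+\infty)-x(-\infty)\bigr],$$
and the assumption that ${\cal X}$ vanishes at $t=\pm\infty$ (so in particular $x(\pm\infty)=0$) yields $\delta S = 0$.

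A more intrinsic route, which I would only sketch as a cross-check, is to work directly with the differential form \p{QH-N1} of ${\cal Q}$: writing $\delta {\cal X} = i\eta{\cal Q}{\cal X} = \eta\,\partial_\theta {\cal X} + i\eta\theta\,\partial_t {\cal X}$, the first summand contributes $\int d\theta\,\partial_\theta(\cdots) \propto \partial_\theta^2(\cdots) = 0$ by nilpotency, while the second summand reduces after $\theta$-integration to $\partial_t$ of an expression whose $t$-integral is a boundary term. The two computations agree, as they must.

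There is no genuine obstacle here; the theorem is essentially a manifestation of two elementary facts, namely that Berezin integration annihilates $\partial_\theta$-exact quantities and that supersymmetry acts on the top component of a superfield by a total time derivative. The only place requiring care is the bookkeeping of Grassmann signs when moving the parameter $\eta$ past $\theta$ and $d\theta$, which is precisely why I would present the component-level derivation as the primary argument.
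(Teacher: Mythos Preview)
Your proof is correct. Your ``cross-check'' sketch is in fact exactly the route the paper takes: it writes $\delta S = \int d\theta\,dt\,\delta{\cal X}$, splits $\delta{\cal X}$ into the $\partial_\theta$-piece (killed by the nilpotency of the Berezin integral) and the $\theta\partial_t$-piece (a boundary term), and stops there. Your primary argument---reducing $S$ to $i\int dt\,\psi$ and then using $\delta\psi=-\eta\dot x$---is the component-level unpacking of the same computation. The paper's version is a touch more economical because it never names the components and hence avoids the Grassmann-sign bookkeeping you flag at the end; your version has the compensating virtue of making completely explicit that it is only $x(\pm\infty)=0$ (not $\psi(\pm\infty)=0$) that is actually used. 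Either presentation is fine.
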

  Here  the symbol
 $\int \! d\theta$ is the Berezin integral, 
 \be
 \lb{Ber-def}
 \int d\theta {\cal X}  \ \equiv \ \frac \partial {\partial \theta} {\cal X} \,.
  \ee

  \begin{proof}
  We have 
  $$ \delta S \ =\ \int d\theta  \int_{-\infty}^\infty dt \, \delta {\cal X} \ =\ -\epsilon \int d\theta \int_{-\infty}^\infty dt \left( \frac \partial {\partial \theta} + i\theta \frac \partial {\partial t} \right) {\cal X} \,. $$
  The first term vanishes due to the definition \p{Ber-def} and the Grassmannian nature of $\theta$. The second term vanishes due to the condition
  ${\cal X}(\pm \infty, \theta) = 0$.
  \end{proof}
  Obviously, the same property holds for  the integral 
  \be
  \lb{act-N2} 
  S \ =\ \int d\bar\theta d\theta \int_{-\infty}^\infty \, dt \, \Phi 
   \ee
  of a ${\cal N} = 2$ superfield $\Phi$.
  
  The superfield ${\cal X}$  in Eq. \p{act-N1} and the superfield $\Phi$ in   Eq. \p{act-N2} can be constructed out of certain basic superfields by multiplications, time differentiations and covariant differentiations with the operator ${\cal D}$ in the ${\cal N} = 1$ case and with the operators $D$ and $\bar D$ in the ${\cal N} = 2$ case. In particular, one can write
   \be
\lb{SN2-Dolb}
S \ =\  \frac 14 \int d\bar \theta d\theta dt \, h_{m\bar n} (Z^k, \bar Z^{\bar k}) \,\bar D 
\bar Z^{\bar n}(t_R) DZ^m(t_L)   \, ,
  \ee
  where $Z^{k=1,\ldots,d}$ are left chiral superfields and $h_{m\bar n}$ is Hermitian. Substituting there the expansions \p{Z-chiral-LR}, not forgetting to expand over $\theta$ and $\bar\theta$ also $t_{L,R} = t \mp i\theta\bar\theta$ and performing the integral over $d\bar\theta d\theta dt$, one can derive the following expression for the Lagrangian:
  \be
\lb{L-compl-N2}
L  \ =\   h_{m\bar n}(z, \bar z)  \dot{z}{\,}^m\dot{\bar z}{\,}^{\bar n} \ + \ {\rm terms\ including\ superpartners}\  \chi^m(t)
\ee
We can now interpret  $z^m$ and $\bar z^{\bar m}$ as the coordinates on a complex manifold with the metric $h_{m\bar n}(z, \bar z)$. The displayed term of the Lagrangian can be interpreted as the kinetic energy of a particle with unit mass moving along the manifold. The dynamical system describing such a motion is called {\it sigma model}. And the whole Lagrangian [due to  Theorem \ref{sup-act}, the corresponding action is invariant under \p{trans-Land1}] represents its supersymmetric version.

The {\it same} dynamical system can also be described in the ${\cal N} = 1$ superfield language. Consider the action \cite{Coles}
\be
\lb{SgenN1}
S \ =\ \frac i2 \int  d\theta dt \, g_{MN}({\cal X}) \, {\dot {\cal X}}^M {\cal D} {\cal X}^N  \, ,
  \ee
   This is not a most general form. The action \p{SgenN1} describes (under the condition that ${\cal N}_{MN}{}^K$ vanishes) only the K\"ahler manifolds; to describe generic complex manifolds, one should add  an extra term. But we do not want to plunge into too much details here, addressing an interested reader to Sect. 4 of Ref. \cite{HKT-nonlin}.
   
  After integration over $d\theta dt$, we obtain the Lagrangian 
   \be
   \lb{L-real-N2}
  L \ =\ \frac 12 g_{MN} \,\dot{x}^M \dot{x}^N  \ + \  {\rm terms\ including\ superpartners}\  \psi^M(t)\,,
   \ee
   i.e. $g_{MN}$ has the meaning of the real metric.
   
  By construction, the action \p{SgenN1} is invariant under ${\cal N} = 1$ transformations, but it is also invariant under the extra supersymmetry transformations
  \p{trans-tilde-XM} provided the conditions \p{I2=-1}, \p{Nijen} {\it and} the condition $I_{MN} = -I_{NM}$ hold. 
  
  Note that, to relate $I_{MN}$ to $I_M{}^N$, we need the metric. The notion of metric was {\it not} used in the proof of Theorem \ref{NN2} or Theorem \ref{NN1}, 
  which thus hold also for non-metric manifolds. Indeed, the equation system \p{dz/dx-multi} for the complex coordinates has solutions provided the condition \p{Nijen} is fulfilled even when $I_{MN} \neq -I_{NM}$.  But we need the metric for the physical applications. And then the condition of the antisymmetry of $I_{MN}$ should be imposed.
  
  The equations of motion that follow from the Lagrangian \p{L-real-N2} describe classical supersymmetric dynamics. The Legendre transformation of \p{L-real-N2} gives us the classical Hamiltonian from which the quantum Hamiltonian can be derived. The quantum system has the same symmetry as the classical one. If we are dealing with ${\cal N} =2$ supersymmetry, a pair of Hermitially conjugate supercharges satisfying the algebra \p{SUSY-alg-compl} exist. This guarantees the two-fold degeneracy of all positive energy states as in \p{SUSY-doublet}.

\begin{acknowledgement}
I am indebted to G. Carron, G. Papadopoulos and A. Rosly for illuminating discussions.
\end{acknowledgement}


\begin{thebibliography}{96}
\bibitem{Wit-Morse} E.~Witten, 
 J. Diff. Geom. {\bf 17} (1982) 661--692.

\bibitem{HKT}
 G.W.~Gibbons, G.~Papadopoulos and
 K.S.~Stelle,
Nucl. Phys.   {\bf B508} (1997) 623--658, arXiv:hep-th/9706207. 

\bibitem{HKT-math}
G.~Grantcharov and Y.S.~Poon,
 Commun. Math. Phys. {\bf 213}
(2000) 19--37, arXiv:math/9908015;

 M.~Verbitsky,
Asian J. Math. {\bf 6} (2002) 679--712, arXiv:math/0112215.

\bibitem{DI}  F.~Delduc and E.~Ivanov,  Nucl. Phys. {\bf B855} (2012) 815--853, arXiv:1104.1429  [hep-th].

\bibitem{HKT-nonlin}  S.A.~Fedoruk, E.A.~Ivanov and A.V.~Smilga,  J. Math. Phys. {\bf 59}, 083501 (2018), arXiv:  1802.09675 [hep-th].

\bibitem{Kodaira} J.~Morrow and K.~Kodaira, {\it Complex manifolds}, AMS Chelsea Publishing, 1971.


\bibitem{NN} A.~Newlander and L.~Nirenberg,  Ann. of Math. (2) {\bf 65} (1957) 391--404. 

 See also 
 L.~Nirenberg, {\it Lectures on Linear Partial Differential Equations}, Amer. Math. Soc., 1973.
 
 \bibitem{Hormand} L. H\"ormander,  Ark. Mat.  {\bf 5} (1965)  425--432.
 
 \bibitem{Frobenius} A.~Clebsch, {\it \"Uber die simultane Integration linearer partieller Differentialgleichungen}, J. Reine. Angew. Math.  {\bf 65} (1866) 257-268;
 
G.~Frobenius, {\it \"Uber das Pfaffsche Problem}, J.  Reine.  Agnew. Math.  {\bf 82} (1877) 230-315.
 
 \bibitem{Cooper} F.~Cooper, A.~Khare and U.~Sukhatme, Phys. Rept. {\bf 251} (1995) 267--385.
 
 \bibitem{Nicol} L.I.~Nicolaescu,  {\it Notes on Seiberg-Witten theory}, AMS, Providence, 2000, Propositions 1.4.23 and 1.4.25.
 
 \bibitem{Dirac-SQM} E.~Ivanov and A.~Smilga,  Int. J. Mod. Phys.
 {\bf A27}, 1230024  (2012), arXiv:1012.2069 [hep-th].
 
 \bibitem{Coles} R.A.~Coles and G.~Papadopoulos, Class. Quanum Grav. {\bf 7} (1990) 427--438.
 
 







\end{thebibliography}
\end{document}